\numberwithin{equation}{section}
\theoremstyle{remark}
\newtheorem{remark}{Remark}
\theoremstyle{definition}
\newtheorem{ngauss}{Definition}
\theoremstyle{plain}
\newtheorem{propos}{Proposition}
\theoremstyle{plain}
\newtheorem{theorem}{Theorem}
\date{}
\author{Radhakrishnan Balu}
\title{Quantum filter processes driven by Markovian white noises have classical versions}
\begin{document}
\maketitle
\begin{abstract}
We study quantum filters that are driven by basic quantum noises and construct classical versions. Our approach is based on exploiting the  quantum markovian component of the observation and measurement processes of the filters. This approach leads in a natural way the classical versions for a class of quantum filters. We consider quantum white noises derived from Wiener and Poisson processes that drive the signal and measurement processes and derive the recursive filtering equations using classical machinery. 
\end{abstract}
\section{Introduction}

Interest in quantum stochastic filters is fueled by experiments in quantum feedback
control \cite{Ar} techniques. There are also theoretical treatments of such processes and one such approach is based on the framework of Hudson-Parthasarathy version of quantum stochastic calculus \cite{Pa}.  For an excellent introduction to quantum filters and the theoretical modeling based on quantum stochastic calculus the reader is referred to the refernce \cite{Lu}. The authors have derived the recursive relations for the quantum stochastic filters driven by quantum white noises for the case of homodyne detection and photon counting experiments. Even though the underlying noises are quantum in nature the resulting filters are classical stochastic processes. Intrigued by this result we set to explore the possibility of deriving the filtering equations within the framework of classical stochastic methods. Our hope is that the rich literature available in classical stochastic processes can be exploited to gather further insights into quatum stochastic filters. As our aim is to establish the applicability of the classical techniques we will focus on deriving the equations presented in \cite{Lu} and work with the physical model and the notations used there in. For the classical techniques we will adapt the methods presented in \cite{AB}, which is a freely downloadable book on the introduction to stochastic calculus and filetering, to derive the equations for homodyne detection. For the case of photon counting problem we will use the techniques from the reference \cite{Pl}.

Our work is motivated by the research that relates classical and quantum stochastic processes and the influence each field has on the other. The relation between non-commutative and classical probabilities has been explored since 1970 \cite{FB,BKS} and continues to grow at a steady pace \cite{WB}. Quantum probability which is a non-commutative probability defined on a von Neumann alegbra \cite{PA}. It is well known that we can construct classical versions from a quantum probability space using spectral theorem if the underlying *-algebra is commutative \cite{Lu}. There are ways to realize the classical versions even if the underlying algebra is a non-commutative von Neumann algebra. As has been established in \cite{BKS} a restricted class of quantum markovian processes can be realized as classical versions. Our report is organized as follows: first we describe the white noise processes that make the physical model in detail and establish them as quantum markovian in the sense of \cite{BKS}. Then we build the corresponding classical versions and prove that they have the same covariance as their quantum counterparts by adapting the techniques from \cite{FB}. Finally we derive Martingales on a classical probability space for the measurement processes and express them as integral of some Brownian motions. This will establish the classical counter part for our quantum filtering problem. \cite{AB} 

\section {Quantum stochastic calculus and filtering}
We start with the physical model described in \cite{Lu} for the quantum filtering problem that is applicable to quantum optics. An example of such a system would be an atom placed in an electromagnetic field. The model consists of an intial system defined on a finite dimensional Hilbert space $\mathscr{H}_0$ that interacts with an environment described by a symmetric boson fock space $\Gamma = \Gamma_s(L^2([0,T))$. Let us work with the quantum probablity space $(\mathcal{A},\mathbb{P})$ where $\mathcal{A}$ is a von Neumann algebra defined as $B(\mathscr{H}_0)\otimes\mathscr{W}$ where $\mathscr{W} = B(\Gamma)$, the von Neumann algebra generated by the bounded operators on $\Gamma$. We let $\mathbb{P}$ as a normal state defined as $\rho\otimes\phi$. In this work we will assume the state $\phi$ to be the fock vacuum state and all the operators to be bounded. Thus the state of our von Neumann algebra is faithful normal.

Accordingly, the quantum stochastic differential equations describing the filetering problem are
\begin{equation}
dj_t(X) = j_t(\mathcal{L}_{L,H}(X))dt + j_t([L^*,X])dA_t + j_t([X,L])dA_t; \label{sg}
\end{equation}
\begin{equation}
dY_t^W = j_t(L+L^*)dt + dA_t + dA^*_t;  \label{hd}
\end{equation}
\begin{equation}
dY_t^\Lambda = d\Lambda_t + j_t(L)dA^*_t + j_t(L^*)dA_t + j_t(L^*L)dt; \label{pc}
\end{equation}
where X is an obeservable that belongs to the bounded set of operators of $\mathscr{H}_0$ and $j_t(X) = U_t^*(X\otimes)U_t$ describes the evolution of the observable in Heisenberg picture. The unitary operator $U_t$ satisfies the quantum stochastic differential equation
\begin{equation}
dU_t = \{LdA^*_t - L^*dA_t -\frac{1}{2}L^*Ldt -iHdt\}dU_t,    U_0 = I
\end{equation}

In the above, equation $\eqref{sg}$ represents the signal process, $\eqref{hd}$ represents the homodyne detection, and $\eqref{pc}$ represents the photon counting measurement prcoesses.
The filetering problem is stated as the derivation of conditional law $\pi_t(X) = \mathbb{P}(j_t(X)|\mathscr{Y}_t)$ where $\mathscr{Y}_t$ is the von Neumann algebra generated by the observation process which is commutative and thus a classical process.

Let us consider the processes, $Z_t(f) = \alpha_tA_t(f) + \beta_tA^*_t(f)$ and $W_t(f) = Z_t(f) + \gamma_t\Lambda_t(f)$ where $A_t$, $A^*_t$, and $\Lambda_t$ are the creation, annihilation, and conservation or gauge processes of the fock space, $\alpha_t$, $\beta_t$, and $\gamma_t$ are time dependent complex variables, and $f\in \Gamma$ . Let us call them as $\mathscr{N}$-Gaussian (Non-commutative) and $\mathscr{N}$-Martingale processes. These are non-commutative processes due to the arbitrary time dependent coefficients and we will see the conditions under which they posses classical versions. These processes satisfy the differential equations
\begin{equation}
dZ_t = \alpha_tdA_t + \beta_tdA^*_t       \label{dg}
\end{equation}
\begin{equation}
dW_t = dZ_t + \gamma_td\Lambda_t           \label{dm}
\end{equation}

Let us take a closer look at the structure of the Fock space operators. In the context of quantum optics they are defined as $A_{(g)}$ and $A^*_{(g)}$ where g is the charecterestic function $1_{[0,t]}$ satisfying CCR on the exponential domain, and on the finite particle vectors with covariance \cite{Pa}
\begin{equation}
[A_g(t),A^*_g(s)] = min(s,t)
\end{equation}
The gauge process which is essentially the Poisson process is defined as 
\begin{equation}
(\Lambda_tf)(\tau) = N_t(\tau)f(t) = |\tau \cap [0,t]|f(\tau), \text{     } f\in\Gamma, \tau \in \Omega, t\in[0,T]
\end{equation}
It is clear that equations $\eqref{dg} and \eqref{dm}$ form the components of the signal and measurement processes. Following \cite{BKS} we can now define the $\mathscr{N}$-analogues of classical Gaussian processes based on their covariance. 
\begin{ngauss} The $\mathscr{N}$-Gaussian process $X^{nBM}_t$ with covariance min(s,t) is called a $\mathscr{N}$-Brownian motion.\\ \\
\end{ngauss}
As we mentioned earlier we adopt the notion of quantum markovian, quantum martingale, and independence defined in \cite{BKS}. 
\begin{ngauss} Let $(\mathcal{A},\mathbb{P})$ be a quantum probability space and $(X)_{t\in{T}}$ is a stochastic process on  $(\mathcal{A},\mathbb{P})$. Let
\begin{equation}
\mathscr{A}_{t]} := vN(X_u|u \leq t) \subset \mathscr{A} 
\end{equation}
\begin{equation}
\mathscr{A}_{[t]} := vN(X_t) \subset \mathscr{A}
\end{equation}
The process $X_{t\in{T}}$ is called quantum markovian if the following holds:
$\forall s,t \in T$ with $s \leq t$  \\
\begin{equation}
\mathbb{P}[X|\mathcal{A}_{s]}] \subset \mathcal{A}_{[s]}, \hspace{1.5cm} \forall X \in \mathcal{A}_{[t]}
\end{equation}
The conditional expectation $E[.|\mathcal{A}_{s]}]$ is a map from $\mathcal{A}$ to $\mathcal{A}_{s]}$. In our context this becomes, at Hilbert space level, a projection from $\mathscr{H_0}\otimes\Gamma$ to $\mathscr{H_0}\otimes\Gamma_{s]}$ (see \cite{BKS} for details)  .
\end{ngauss}
\begin{ngauss}
Let $(\mathscr{A}, \mathbb{P})$ be a quantum probability space, a process $(X)_{t\in{T}}$ defined on it is a called a martingale if
\begin{equation}
\mathbb{P}[X_t|\mathscr{A}_{s]}] = X_s, \text{      } \forall s \leq t
\end{equation}
\end{ngauss}

The following proposition will be helpful in establishing the martingale property of $\mathscr{N}$-Gaussian processes.
\begin{propos} (Ref \cite{BKS}) A  $\mathscr{N}$-Gaussian process is a martingale if and only if $Pf_{s]} = f_s  \forall s \leq t$ which is the case iff c(s,t) = c(s,s) $\forall s \leq t$.
\end{propos}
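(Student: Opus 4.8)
The plan is to exploit the dictionary, recalled in the definitions above, between the $\mathscr{N}$-Gaussian process and its one-particle data. Each operator $X_t$ of the process is a field operator $X(f_t)$ determined by a vector $f_t$ in the one-particle space $\mathcal{H}_1 = L^2([0,T))$ (assembled from the characteristic function $1_{[0,t]}$ and the coefficients $\alpha_t,\beta_t$); its covariance is the inner product $c(s,t)=\langle f_s,f_t\rangle$ evaluated in the vacuum state; and --- this is the structural input from \cite{BKS} --- the quantum conditional expectation onto the past is the second quantization of the orthogonal projection $P\equiv P_{s]}$ of $\mathcal{H}_1$ onto the subspace $\mathcal{H}_{s]}=\overline{\mathrm{span}}\{f_u:u\le s\}$ associated with $\Gamma_{s]}$. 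Concretely, on field operators this reads $\mathbb{P}[X(f_t)\,|\,\mathscr{A}_{s]}]=X(P_{s]}f_t)$. With this in hand the proposition becomes a statement purely about the geometry of the vectors $f_t$ and the projections $P_{s]}$.

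First I would prove the first equivalence. Since the field map $f\mapsto X(f)$ is linear and the state is faithful normal, two field operators coincide precisely when their one-particle vectors coincide; thus the martingale identity $\mathbb{P}[X_t\,|\,\mathscr{A}_{s]}]=X_s$ is equivalent, after substituting $\mathbb{P}[X_t\,|\,\mathscr{A}_{s]}]=X(P_{s]}f_t)$ and $X_s=X(f_s)$, to $P_{s]}f_t=f_s$ for all $s\le t$. This is the condition written $Pf_{s]}=f_s$ in the statement (with $P=P_{s]}$). Next, the easy half of the second equivalence: assuming $P_{s]}f_t=f_s$ and using $f_s\in\mathcal{H}_{s]}$ together with self-adjointness of $P_{s]}$, one gets $c(s,t)=\langle f_s,f_t\rangle=\langle P_{s]}f_s,f_t\rangle=\langle f_s,P_{s]}f_t\rangle=\langle f_s,f_s\rangle=c(s,s)$.

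The converse is where the argument must be run carefully. From the scalar identity $c(s,t)=c(s,s)$ at the single index $s$ one only extracts $\langle f_s,f_t-f_s\rangle=0$, whereas $P_{s]}f_t=f_s$ demands $f_t-f_s\perp\mathcal{H}_{s]}$, i.e. orthogonality to every $f_u$ with $u\le s$. The observation that closes the gap is that the hypothesis is imposed uniformly over all admissible pairs: applying the equality to the pairs $(u,t)$ and $(u,s)$, both legitimate since $u\le s\le t$, yields $c(u,t)=c(u,u)=c(u,s)$, so that $\langle f_u,f_t-f_s\rangle=c(u,t)-c(u,s)=0$ for every $u\le s$. Hence $f_t-f_s\perp\mathcal{H}_{s]}$ while $f_s\in\mathcal{H}_{s]}$, which forces $P_{s]}f_t=P_{s]}f_s+P_{s]}(f_t-f_s)=f_s$, and by the first equivalence the process is a martingale.

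I expect the genuine work to lie not in these linear-algebra manipulations but in justifying the field-level identity $\mathbb{P}[X(f)\,|\,\mathscr{A}_{s]}]=X(P_{s]}f)$ --- that the quantum conditional expectation restricts to the one-particle projection --- which is the quasi-free mechanism underlying everything and which I would either import verbatim from \cite{BKS} or verify directly on the exponential domain where the CCR act. The second point worth flagging is the role of the quantifier: the single-index covariance condition is adequate only because it is assumed uniformly in $s\le t$, and it is exactly this uniformity that lets the pairwise equalities assemble into orthogonality against the whole past subspace $\mathcal{H}_{s]}$.
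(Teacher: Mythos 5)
Your proof is correct, but there is no proof in the paper to compare it against: the proposition is stated with the tag ``(Ref \cite{BKS})'' and the paper simply imports it --- the proof environment that follows belongs to the next proposition, which merely \emph{invokes} the present statement (via $c(t,s)=\min(s,t)=s=c(s,s)$) to conclude the martingale property of $\mathscr{N}$-Brownian motion. What you have done is reconstruct the argument the paper delegates to its reference, and your reconstruction is the right one. The load-bearing input, which you correctly isolate and flag, is the quasi-free identity $\mathbb{P}[X(f)\,|\,\mathscr{A}_{s]}]=X(P_{s]}f)$, i.e.\ that the state-preserving conditional expectation onto the past algebra acts on field operators as the one-particle orthogonal projection; once that is granted, both equivalences are inner-product manipulations, and your treatment of the converse is exactly the step a careless reading would fumble: the single-index identity $c(s,t)=c(s,s)$ only gives $f_t-f_s\perp f_s$, and it is the uniformity of the hypothesis over all pairs $(u,t)$, $(u,s)$ with $u\le s\le t$ that yields $f_t-f_s\perp\overline{\mathrm{span}}\{f_u:u\le s\}$ and hence $P_{s]}f_t=f_s$. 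Two small remarks. First, for injectivity of $f\mapsto X(f)$ do not lean on faithfulness of the state on the full algebra: the vacuum vector state on $B(\Gamma)$ is not faithful (despite the paper's assertion to the contrary); the clean argument is the one implicit in your computation, namely $X(h)\Omega=h$ on the one-particle subspace, so $\mathbb{P}[X(h)^2]=\|h\|^2=0$ forces $h=0$. Second, your reading of the paper's garbled notation ``$Pf_{s]}=f_s$'' as $P_{s]}f_t=f_s$, with the past subspace taken to be the closed span of $\{f_u:u\le s\}$, is the correct interpretation; for the $\mathscr{N}$-Brownian case $f_u=1_{[0,u]}$ this span is $L^2([0,s])$, reconciling your formulation with the projection onto $\Gamma_{s]}$ mentioned in the paper's Definition 2.
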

\begin{propos} (Ref \cite{BKS}) $\mathscr{N}$-Brownian motion and $\mathscr{N}$-Martingale processes are quantum markovian as well as quantum martingale.
\end{propos}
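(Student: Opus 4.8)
The plan is to treat the two claimed properties separately, reducing the martingale assertion to Proposition~1 and establishing the Markov assertion directly from the continuous tensor product structure of the Fock space. For the \emph{martingale} property I would invoke Proposition~1, which asserts that an $\mathscr{N}$-Gaussian process is a martingale precisely when its covariance satisfies $c(s,t)=c(s,s)$ for all $s\le t$. First I would record the vacuum covariance of each process. For the $\mathscr{N}$-Brownian motion $X^{nBM}_t$ the covariance is $c(s,t)=\min(s,t)$ by definition, so that $c(s,t)=s=c(s,s)$ whenever $s\le t$, and Proposition~1 yields the martingale property immediately. For the $\mathscr{N}$-Martingale $W_t$ I would exhibit the analogous identity: using $A_u\Omega=0$ and $\Lambda_u\Omega=0$ together with the commutation relation $[A_g(t),A^*_g(s)]=\min(s,t)$, only the creation part survives against the vacuum, and the resulting covariance again collapses to its diagonal value $c(s,s)$ on $s\le t$, so the hypothesis of Proposition~1 is met.

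For the \emph{Markov} property I would work at the Hilbert space level, where the conditional expectation $E[\,\cdot\,|\mathcal{A}_{s]}]$ is realized as the orthogonal projection $P$ from $\mathscr{H}_0\otimes\Gamma$ onto $\mathscr{H}_0\otimes\Gamma_{s]}$. The essential ingredient is the factorization $\Gamma=\Gamma_{s]}\otimes\Gamma_{[s}$, under which the vacuum factorizes as $\Omega=\Omega_{s]}\otimes\Omega_{[s}$. Writing $A_t=A_s+(A_t-A_s)$, and similarly for $A^*_t$ and $\Lambda_t$, splits each generator into a part acting on $\Gamma_{s]}$ (and thus lying in $\mathcal{A}_{s]}$) and an increment acting on $\Gamma_{[s}$. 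Because the creation, annihilation, and gauge increments all have vanishing vacuum expectation on the second factor, applying $P$ removes the future increments and leaves an operator built from $A_s$, $A^*_s$, $\Lambda_s$ alone, giving the containment $\mathbb{P}[X|\mathcal{A}_{s]}]\subset\mathcal{A}_{[s]}$ required for $X\in\mathcal{A}_{[t]}$.

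The hard part, and where I would concentrate the effort, is verifying that the conditional expectation genuinely lands in the \emph{single-time} algebra $\mathcal{A}_{[s]}=vN(X_s)$ rather than merely in the larger past algebra $\mathcal{A}_{s]}$. This rests on the independence of the increments from the past, which follows from the tensor factorization and the factorization of the vacuum state: the increment $X_t-X_s$ is independent of $\mathcal{A}_{s]}$, so that for a bounded function of $X_t$ the conditional expectation reduces to an averaged function of $X_s$ via the same independent-increment computation one uses classically. The gauge term is the only genuine complication, since $\Lambda_t$ does not commute with $A_t,A^*_t$; here I would use the It\^o relations of the gauge process to confirm that its increment still decouples across the tensor factorization, after which the averaging argument closes the proof.
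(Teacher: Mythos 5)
Your martingale step for the $\mathscr{N}$-Martingale process contains a genuine gap: Proposition~1 is a statement about $\mathscr{N}$-\emph{Gaussian} processes, and $W_t = Z_t + \gamma_t\Lambda_t$ is not one --- that is precisely why the paper gives it a separate name. The equivalence ``martingale iff $c(s,t)=c(s,s)$ for $s\le t$'' holds in the Gaussian case because there the conditional expectation is the linear projection determined by the covariance; outside the Gaussian class second moments do not determine conditional expectations, and the diagonal-collapse condition is strictly weaker than the martingale property. A zero-mean classical example: $X_0=0$, $X_1=\xi$, $X_2=\xi+\xi^2-1$ with $\xi$ standard normal satisfies $c(s,t)=c(s,s)$ for all $s\le t$, yet $\mathbb{E}[X_2\mid X_1]=\xi+\xi^2-1\neq X_1$. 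So ``the covariance of $W$ collapses to its diagonal, hence Proposition~1 applies'' is not a valid inference. The fix is already inside your own proposal: the projection computation you set up for Markovianity shows that the creation, annihilation and gauge increments on $\Gamma_{[s}$ are annihilated by the vacuum conditional expectation, i.e.\ $P_{s]}(\Lambda_t-\Lambda_s)P_{s]}=0$ and similarly for $A,A^*$, which gives $\mathbb{P}[W_t|\mathscr{A}_{s]}]=W_s$ directly, with no appeal to Proposition~1. (Note that this, like the paper's claim itself, tacitly requires the coefficients $\alpha_t,\beta_t,\gamma_t$ to be constant; with genuinely time-dependent coefficients the increments are no longer conditionally centered and the martingale assertion fails.)

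On the Markov property you take a genuinely different route from the paper, and yours is the more robust one. The paper quotes the Gaussian covariance-factorization criterion $c(t,s)c(u,u)=c(t,u)c(u,s)$ for $s\le u\le t$ (citing Theorem 4.9 of \cite{Sim}), checks it for $c=\min$, and then disposes of the gauge contribution in a single sentence (``the gauge process being the Poisson process \dots follow immediately''), which is an assertion rather than an argument --- indeed, for the martingale half it is suspect on its face, since the classical Poisson process is Markov but is \emph{not} a martingale. Your argument --- factorize $\Gamma=\Gamma_{s]}\otimes\Gamma_{[s}$ with $\Omega=\Omega_{s]}\otimes\Omega_{[s}$, split each of $A_t,A_t^*,\Lambda_t$ into a past part plus an increment on the second factor, and use independence of the increments from $\mathcal{A}_{s]}$ to compute $\mathbb{P}[h(X_t)|\mathcal{A}_{s]}]$ as an averaged function of $X_s$ --- treats the Gaussian and gauge parts uniformly, justifies landing in the single-time algebra $\mathcal{A}_{[s]}$ rather than merely in $\mathcal{A}_{s]}$, and as a by-product produces the transition operators $\mathscr{K}_{s,t}$ that the paper needs later for the Kolmogorov-type construction. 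What the paper's route buys is brevity in the Gaussian case; what yours buys is that the $\mathscr{N}$-Martingale case is actually proved.
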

\begin{proof}
A $\mathscr{N}$-Gaussian process with covariance c is quantum markovian iff $\forall$ s,u,t $\in T$, such that s $\leq u \leq t$, 
\begin{equation}
c(t,s)c(u,u) = c(t,u)c(u,s)
\end{equation}  
Thus $\mathscr{N}$-Brownian motion processes is quantum markovian.
See the proof of theorem 4.9 in \cite{Sim} for the validity of equation (10).\\
We have c(t,s) = min(s,t) = s = c(s,s), the martingale property is established.\\
The gauge process being the Poisson process the markovian and martingale properties of $\mathscr{N}$-Martingale follow immediately.
\end{proof}
Having established the quantum markovianity of the $\mathscr{N}$-Gaussian and $\mathscr{N}$-Martingale processes we will build the classical version that preserves the covariance. 
\begin{ngauss} Let $(X_t)_{t\in T}$ be a stochastic process on a quantum probability space $(\mathscr{A},\mathbb{P})$. We call a process $(V_t)_{t\in T}$ taking values in $\mathbb{R}$ and defined on a classical probability space $(\Omega,\mathscr{U},\mathscr{P})$ a classical version of $(X_t)_{t\in T}$ if all time ordered moments coincide. That is $\forall{n} \in \mathscr{N}$, $\forall{t_1,\dots,t_n} \in T$  with $t_1\leq\dots\leq t_n$, and all bounded Borel functions $h_1,\dots,h_n$ on $\mathscr{R}$ the following equality holds:
\begin{equation}
\mathbb{P}[h_1(X_{t_1})\dots h_n(X_{t_n})] = \int_\Omega h_1(V_{t_1}(\omega))\dots  h_n(V_{t_n}(\omega))d\mathscr{P}(\omega)
\end{equation}
\end{ngauss}
\begin{remark} Our interest in time ordered moments is due to its applicability in covariance which is needed to characterize Gaussian processes. From the results on time ordered moments we infer covraince on two different epochs are preserved irrespective of their order. 

Our classical probability space is the real line $\mathscr{R}$ and the sigma field $\mathscr{U}$ is the sigma field generated by the borel sets. We define the probabilty measure and the classical version $\tilde{V}_t$ as $\mathscr{P}(\tilde{V}_t \in {B}) = \mathbb{P}(1_{B}(X_t))$, where $B \in {\mathscr{U}}$.
Let us define the transition operators for the quantum markov processes that is the same in the classical context. Then we realize the classical version using the transition operator, $\grave{a}$ la Kolmogorov construction. This a widely used mathematical technique as can be seen from \cite{BKS}, \cite{AFL}, and \cite{Ra}.
\end{remark}
\begin{ngauss} Let $(X_t)_{t\in{T}}$ as defined above and is self-adjoint $\forall t\in T$. Let $spect(X_t)$ and $\nu_t$ are the spectrum and distribution of $X_t$ respectively. Let us define the space
\begin{equation}
L^{\infty}(X_t) = vN(X_t) = L^{\infty}(spect(X_t),\nu_t).
\end{equation}
and the transition operator 
\begin{equation}
\mathscr{K}_{s,t}:L^{\infty}(X_t) \longrightarrow L^{\infty}(X_s) \text{,  with   $s\leq t$}.
\end{equation}
defined as
\begin{equation}
\mathbb{P}[h(X_t)|\mathscr{A}_{s]}]=\mathbb{P}[h(X_t)|\mathscr{A}_{[s]}]=(\mathscr{K}_{s,t}h)(X_s).
\end{equation}
\end{ngauss}
\begin{theorem} The $\mathscr{N}$-Gaussian process $X^{nBM}_t$ has a classical version $V_t$. In particular the covraince is preserved for the $V_t$ Gaussian process and hence is a classical Brownian mtion.
\end{theorem}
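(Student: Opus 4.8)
The plan is to construct $V_t$ by the Kolmogorov recipe sketched in the Remark, taking the transition operators $\mathscr{K}_{s,t}$ of Definition 6 as Markov kernels, and then to check that the resulting classical finite-dimensional distributions reproduce every time-ordered quantum moment in the sense of Definition 5. First I would fix the one-time marginals: since $X^{nBM}_t$ is self-adjoint and $\mathbb{P}$ is faithful normal, the spectral theorem identifies $vN(X_t)=L^{\infty}(\text{spect}(X_t),\nu_t)$, and $\nu_t(B):=\mathbb{P}[1_B(X_t)]$ is a genuine probability measure on $\mathbb{R}$; the $\mathscr{N}$-Gaussian property with covariance $\min(s,t)$ forces $\nu_t$ to be the centered Gaussian $N(0,t)$.

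Next I would produce the transition kernels. By Proposition 2 the process is quantum Markovian, so $\mathbb{P}[h(X_t)|\mathscr{A}_{s]}]\in\mathscr{A}_{[s]}=vN(X_s)$ and hence equals $(\mathscr{K}_{s,t}h)(X_s)$; because conditional expectation is positive and unital, each $\mathscr{K}_{s,t}$ is the action of a Markov transition kernel. The Chapman–Kolmogorov identity $\mathscr{K}_{s,u}\mathscr{K}_{u,t}=\mathscr{K}_{s,t}$ for $s\le u\le t$ follows from the tower property of conditional expectation together with the Markov property. The marginals $\nu_t$ and kernels $\mathscr{K}_{s,t}$ then form a consistent family, and Kolmogorov's extension theorem delivers a real-valued process $(V_t)$ on a classical space $(\Omega,\mathscr{U},\mathscr{P})$ with exactly these finite-dimensional laws.

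The core step is matching the time-ordered moments. For $t_1\le\cdots\le t_n$ I would condition on $\mathscr{A}_{t_{n-1}]}$ and peel off the top time: the factors $h_1(X_{t_1})\cdots h_{n-1}(X_{t_{n-1}})$ already lie in $\mathscr{A}_{t_{n-1}]}$, so the module property of the quantum conditional expectation and the Markov property give
\begin{equation}
\mathbb{P}[h_1(X_{t_1})\cdots h_n(X_{t_n})]=\mathbb{P}\big[h_1(X_{t_1})\cdots h_{n-1}(X_{t_{n-1}})\,(\mathscr{K}_{t_{n-1},t_n}h_n)(X_{t_{n-1}})\big].
\end{equation}
Absorbing $(\mathscr{K}_{t_{n-1},t_n}h_n)(X_{t_{n-1}})$ into the factor at $t_{n-1}$ and iterating down to $t_1$ produces precisely the iterated kernel integral that, on the classical side, equals $\int_\Omega h_1(V_{t_1})\cdots h_n(V_{t_n})\,d\mathscr{P}$. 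This is exactly the identity required by Definition 5, so $V_t$ is a classical version. Specializing to the two-time moment recovers $\mathbb{P}[X_sX_t]=\min(s,t)$ on the classical side, so (using the Remark for the unordered case) the covariance is preserved; since the marginals are centered Gaussian and the joint law is Markov with covariance $\min(s,t)$, $(V_t)$ is the centered Gaussian process with covariance $\min(s,t)$, i.e. a (continuous) version of standard Brownian motion.

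The step I expect to be the main obstacle is this moment matching, and the difficulty is intrinsically non-commutative. The peeling argument succeeds only because the product is time-ordered, so that every factor to be absorbed sits inside the conditioning algebra $\mathscr{A}_{t_{n-1}]}$ and the left-module property applies verbatim; for an out-of-order product, operator non-commutativity would block the reduction, which is precisely why Definition 5 restricts to time-ordered moments. I would also need to justify that the state $\mathbb{P}$ intertwines with the conditional expectations and that each $\mathscr{K}_{s,t}$ is genuinely positivity- and identity-preserving so that the Kolmogorov data is consistent; both hold in the present faithful-normal, vacuum-based setting of \cite{BKS}, but they are the hypotheses that must be verified rather than assumed.
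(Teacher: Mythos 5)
Your proposal is correct and follows essentially the same route as the paper: the paper's proof invokes the iterated transition-operator identity of \cite{BKS} (Theorem 4.4) together with the Kolmogorov-type construction already sketched in its Remark, and then reads off covariance preservation from the two-time moments. Your peeling argument via the module property of conditional expectation and the Chapman--Kolmogorov consistency check are exactly the details the paper delegates to those citations, so you have simply made the same proof explicit.
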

\begin{proof}
Following the line of reasoning in \cite{BKS} theorem 4.4 we get
\begin{equation}
\mathbb{P}[h_1(X_{t_1}),\dots,h_n(X_{t_n})] = \mathbb{P}[(h_1.\mathscr{K}_{t_1,t_2}(h_2.\mathscr{K}_{t_2,t_3}(h_3.\dots)))(X_{t_1})]
\end{equation}
It follows that for arbitrary $t_1$ and $t_2$ we have
\begin{equation}
\mathbb{P}[h_1(X_{t_1}),h_2(X{t_2})]=\mathscr{P}[h_1(V_{t_1}),h_2(V_{t_2})] = c(t_1,t_2) = min(t_1,t_2)
\end{equation}
\\In essense we have replaced $X^{nBM}_t$ that has non-commutative components with a single random variable such that the first order moments coincide. The resulting stochastic process is commutative on a classical probability space.
\end{proof}
\section{The filtering problem}
The objective is to find a least mean square estmate of a system of observable $X \in \mathscr{B}$ at time t, given the observations $Y_t$ up to this time, that is to obtain a recursive relation
\begin{equation}
\pi_t(X) = \mathbb{P}(j_t(X)|\mathscr{Y}_t) \label{fe}
\end{equation}
where $\mathscr{Y}_t =vN\{Y_s:0\leq s \leq t \} $ is the commutative von Neumann algebra generated by the observation process. Let us now adopt the definition of conditional expectation from \cite{Lu}.
\begin{ngauss} (Conditional expectation). Let $(\mathscr{A},\mathbb{P})$ be a quantum probability space and let $\mathscr{N} \subset \mathscr{A}$ be a commutative von Neumann subalgebra. The conditional expectation $\mathbb{P}(.|\mathscr{N})$ is a map from $\mathscr{N}'$ onto $\mathscr{N}$ such that the following holds:
\begin{align}
\mathbb{P}(.|\mathscr{N}):\mathscr{N}' \rightarrow \mathscr{N},\nonumber \\
\mathbb{P}(\mathbb{P}(B|\mathscr{N})A) = \mathbb{P}(BA), \text{   } \forall A \in \mathscr{N}, B \in \mathscr{N}' 
\end{align}
Here $\mathscr{N}'$ is the commutant of $\mathscr{N}$.
\end{ngauss}
\begin{remark}
$\mathscr{N}$ is the commutative von Neumann algebra generated by the observation process. That is, an algebra of compatible observables. The domain of the conditional map has to be the commutant of $\mathscr{N}$ in order to make sure that compatible observables are present in the same realization. If in addition B = $B^*$ then B is also in the commutative algebra and hence in the classical probability space by spectral theorem. The self-nondemolition property is established easily as in \cite{Lu}.
\end{remark}
We can now define all the objects of $\eqref{sg}, \eqref{hd}, and \eqref{pc}$ to the corresponding classical objects 
and derive the filtering equations using the classical results as in \cite{AB} and for the photon counting problem based on \cite{Pl}.  Let us rewrite them using $\eqref{dm} and \eqref{dm}$ as
\begin{equation}
dj_t(X) = j_t(\mathcal{L}_{L,H}(X))dt + dZ_t; \label{sig}
\end{equation}
\begin{equation}
dY_t^W = j_t(L+L^*)dt + dZ_t;  \label{hed}
\end{equation}
\begin{equation}
dY_t^\Lambda = j_t(L^*L)dt + dW_t; \label{phc}
\end{equation}
\begin{theorem}(Kushner-Stratonowich, ref \cite{AB})
Let X and Y are the signal and observation processes satisfying the following classical stochastic differential equations
\begin{equation}
dX_t = f(t, X_t)dt + \sigma(t,X)dV_t
\end{equation}
\begin{equation}
dY_t = h(t, X_t)dt + dW_t, \text{     } Y_0 = 0
\end{equation}
where $V_t$ and $W_t$ are two independent Brownian motions.
Then the filtering equation is given by
\begin{equation}
d\pi_t(\phi) =  \pi_t(A_s(\phi))+ (\pi_t(h^T\phi)-\pi_t(h^T)\pi_t(\phi))(dY_t-\pi_t(h)dt)
\end{equation}
\end{theorem}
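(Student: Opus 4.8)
The plan is to prove the filtering equation by the \emph{innovations approach}. First I would apply It\^o's formula to $\phi(X_t)$ for a smooth test function $\phi$, obtaining
\begin{equation}
\phi(X_t) = \phi(X_0) + \int_0^t A_s\phi(X_u)\,du + \int_0^t (\nabla\phi^T\sigma)(u,X_u)\,dV_u,
\end{equation}
where $A_s\phi = f^T\nabla\phi + \tfrac12\,\mathrm{tr}(\sigma\sigma^T\nabla^2\phi)$ is the generator of the signal diffusion. The stochastic integral is a martingale $M_t$, and I would record at the outset that $\langle M,W\rangle_t = 0$ because $V_t$ and $W_t$ are independent; this orthogonality is what will remove any It\^o correction from the gain below.

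Next I would introduce the innovations process
\begin{equation}
I_t := Y_t - \int_0^t \pi_u(h)\,du,
\end{equation}
and prove that $I_t$ is a Brownian motion relative to the observation filtration $\mathscr{Y}_t$. The argument is to show $I_t$ is a continuous $\mathscr{Y}_t$-martingale: writing $dI_t = (h(t,X_t)-\pi_t(h))\,dt + dW_t$, the drift is the conditional error $h - \pi(h)$, whose $\mathscr{Y}_t$-conditional mean vanishes by definition of $\pi_t$, while $W_t$ projects to a $\mathscr{Y}_t$-martingale; since the quadratic variation is unchanged, $\langle I\rangle_t = t$, and L\'evy's characterization gives the claim.

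I would then posit the semimartingale decomposition $d\pi_t(\phi) = \pi_t(A_s\phi)\,dt + G_t\,dI_t$, in which the drift is forced by conditioning the finite-variation part of $\phi(X_t)$, and $G_t$ is a $\mathscr{Y}_t$-adapted gain to be identified. Using the martingale representation of $\mathscr{Y}_t$-martingales against the innovations Brownian motion $I_t$, the innovation-martingale part of $\pi_t(\phi)$ is $\int_0^t G_u\,dI_u$; computing the cross quadratic variation $\langle \pi_\cdot(\phi),I\rangle_t$ from the joint dynamics of $\phi(X_t)$ and $Y_t$ identifies
\begin{equation}
G_t = \pi_t(\phi h^T) - \pi_t(\phi)\pi_t(h^T),
\end{equation}
the conditional covariance of $\phi(X_t)$ and $h(t,X_t)$; the absence of a $d\langle M,W\rangle$ term here is precisely the independence noted above. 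Substituting $dI_t = dY_t - \pi_t(h)\,dt$ then yields the stated Kushner--Stratonovich equation.

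The main obstacle I expect is the rigorous justification of the representation step: one must show that every $\mathscr{Y}_t$-martingale can be written as a stochastic integral against the innovations process $I_t$, equivalently that $I_t$ carries the full observation filtration. This ``innovations problem'' is the delicate point, and is where I would lean most heavily on the results of \cite{AB}; by contrast, the verification that $I_t$ is Brownian and the identification of $G_t$ are comparatively routine once this representation is in hand. An alternative route, should the representation prove troublesome, is to derive the unnormalized Zakai equation via a Girsanov change to a reference measure under which $Y_t$ is a Brownian motion independent of the signal, and then recover the equation above by applying It\^o's formula to the normalized filter.
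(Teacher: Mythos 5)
The paper does not actually prove this theorem: it is imported verbatim from \cite{AB} as a black-box ingredient, its only role being to be invoked later in the proof of the paper's main theorem. So there is no internal proof to compare against, and your proposal must be judged on its own terms. On those terms it is essentially correct, and it is the standard innovations-approach derivation (Fujisaki--Kallianpur--Kunita): It\^o's formula to produce the generator term, L\'evy's characterization to show the innovations process $I_t$ is a $\mathscr{Y}_t$-Brownian motion, martingale representation against $I_t$, and identification of the gain via the cross quadratic variation, with the independence of $V$ and $W$ eliminating the correlation correction. This is one of the two derivations in \cite{AB}; the other, which that reference leans on more heavily, is the reference-measure/Zakai route you correctly describe as a fallback. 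Your drift term $\pi_t(A_s\phi)\,dt$ also silently repairs the missing $dt$ in the paper's displayed equation, which is a typographical defect of the statement, not of your argument.

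One correction is needed in your write-up: you assert that representing every $\mathscr{Y}_t$-martingale as a stochastic integral against $I_t$ is ``equivalently'' the statement that $I_t$ carries the full observation filtration. These are not equivalent, and conflating them makes the step look harder than it is. The innovations problem proper --- whether $\sigma(I_s : s \leq t) = \mathscr{Y}_t$ --- is genuinely delicate and has a negative answer in general (Cirel'son's example); but the Fujisaki--Kallianpur--Kunita representation theorem establishes the integral representation for square-integrable $\mathscr{Y}_t$-martingales \emph{without} resolving that problem, under integrability hypotheses such as $E\int_0^t |h(s,X_s)|^2\,ds < \infty$. You should therefore replace the appeal to the innovations problem by an appeal to the FKK theorem (and add the integrability hypothesis, which the paper's statement omits entirely); with that substitution your proof closes.
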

\begin{theorem}(ref \cite{Pl})
Let ${X_t}, {t\ge 0}$ is a semimartingale and ${Y_t}, {t\ge 0}$ is a counting process such that
\begin{equation}
X_t = X_0 + \int_0^t H_sds + Z_t
\end{equation}
where $Z_t$ is a $\mathscr{F}_t$-martingale, $H_t$ is a bounded $\mathscr{F}_t$-progressive process, $E[sup_{s \leq {t}}|X_s|] < \infty$ for every $t \geq {0}$ and $X_0$ is a square integrable randaom variable.
\begin{equation}
Y_t = Y_0 + \int_0^t h_sds + M_t
\end{equation}
where $M_t$ is a $\mathscr{F}_t$-martingale with mean 0 and $\sigma(M_u - M_t;u \ge {t})$ is independent of $\sigma(Y_u,h_u; u \ge {t})$ and $h_t=h(X_t)$ is a positive bounded $\mathscr{F}_t$-progressive process such that $E\int_0^t h_s^2ds < {\infty}, \forall {t}$.
Then the filtering equation is given by
\begin{equation}
\pi(X_t) = \pi(X_0) + \int_0^t \pi(X_s)ds + \int_0^t [\pi(h_s)]^{-1}[\pi(X_{s^-}-h_s)-\pi(X_{s^-})\pi(h_s)]dm_s,
\end{equation} where
\begin{equation}
m_t = Y_t \int_0^t \pi(h_s)ds
\end{equation}
\end{theorem}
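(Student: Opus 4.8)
The plan is to derive this point-process filtering equation by the classical innovations method of \cite{Pl}, whose three pillars are the construction of an innovation martingale, a martingale representation theorem for the observation filtration, and the identification of the filter gain by a weak-testing (bracket) computation. Throughout, $\pi(\cdot)$ denotes the conditional expectation $E[\,\cdot\,|\,\mathscr{Y}_t]$ onto the observation filtration $\mathscr{Y}_t = vN\{Y_s : 0\le s\le t\}$.

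First I would establish that the compensated observation
\[
m_t = Y_t - \int_0^t \pi(h_s)\,ds
\]
is a $\mathscr{Y}_t$-martingale. Because $h_t=h(X_t)$ is the $\mathscr{F}_t$-intensity of the counting process $Y$ (so that $M_t=Y_t-\int_0^t h_s\,ds$ is the given $\mathscr{F}_t$-martingale), the $\mathscr{Y}_t$-dual predictable projection of $Y$ is obtained by replacing $h_s$ with its filtered value $\pi(h_s)$; the stated independence of the future increments of $M$ from the observation data is exactly the hypothesis needed for the projection theorem to apply and for $m_t$ to have mean zero and orthogonal increments. The integrability bound $E\int_0^t h_s^2\,ds<\infty$ justifies the conditional Fubini interchange used here.

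Next I would compute the $\mathscr{Y}_t$-semimartingale decomposition of the filter itself. Applying $\pi$ to the signal equation $X_t=X_0+\int_0^t H_s\,ds+Z_t$, the finite-variation part of $\pi(X_t)$ is the filtered drift $\int_0^t \pi(H_s)\,ds$ with $\pi(H_s)=E[H_s\,|\,\mathscr{Y}_s]$, while the conditioned $\mathscr{F}_t$-martingale $Z_t$ and the fluctuation of $E[H_s\,|\,\mathscr{Y}_t]$ about its predictable version are absorbed into a $\mathscr{Y}_t$-martingale part $N_t$; the hypothesis $E[\sup_{s\le t}|X_s|]<\infty$ supplies the uniform integrability that makes this decomposition rigorous. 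Since $\mathscr{Y}_t$ is generated by the single counting process $Y$, the martingale representation theorem for point-process filtrations allows me to write $N_t=\int_0^t K_s\,dm_s$ for a $\mathscr{Y}_t$-predictable integrand $K_s$, giving $\pi(X_t)=\pi(X_0)+\int_0^t\pi(H_s)\,ds+\int_0^t K_s\,dm_s$.

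The final and most delicate step is to identify the gain $K_s$. I would test this decomposition against an arbitrary bounded $\mathscr{Y}_t$-predictable process and match both sides using the product rule for jump processes, or equivalently by computing the jump of $\pi(X_t)$ at a jump time of $Y$. The defining property $E[\pi(B)A]=E[BA]$ for $\mathscr{Y}_t$-measurable $A$ forces the cross term to be the conditional covariance of $X_{s^-}$ and $h_s$, which yields
\[
K_s=[\pi(h_s)]^{-1}\bigl(\pi(X_{s^-}h_s)-\pi(X_{s^-})\pi(h_s)\bigr),
\]
precisely the stated integrand. I expect this identification to be the main obstacle: it demands careful bookkeeping of the predictable left-limit versions $\pi(X_{s^-})$ and of the normalizing factor $[\pi(h_s)]^{-1}$, whose invertibility is guaranteed by the positivity of $h$, together with control of the jump times of $Y$ through the intensity. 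Substituting $K_s$ back into the decomposition then completes the derivation.
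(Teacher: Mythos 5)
The first thing to note is that the paper itself contains no proof of this statement: the theorem is imported verbatim from the cited reference \cite{Pl}, exactly as Theorem 2 is imported from \cite{AB}, and is used later as a black box. So there is no internal argument to compare yours against; what you have produced is a reconstruction of the proof that the paper delegates to its source. Your reconstruction is the standard innovations-method derivation --- innovation martingale $m_t = Y_t - \int_0^t \pi(h_s)\,ds$, martingale representation theorem for filtrations generated by a point process, and gain identification by testing against bounded $\mathscr{Y}_t$-predictable integrands --- which is indeed the route taken in the classical point-process filtering literature that \cite{Pl} follows. In that sense your approach coincides with the source's, and the outline is sound.

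Two remarks. First, your proposal silently corrects three misprints in the statement as it appears in the paper: the innovation must read $m_t = Y_t - \int_0^t \pi(h_s)\,ds$ (the printed version omits the minus sign), the finite-variation part of the filter must be $\int_0^t \pi(H_s)\,ds$ rather than $\int_0^t \pi(X_s)\,ds$, and the gain involves the product $\pi(X_{s^-}h_s)$ rather than the difference $\pi(X_{s^-}-h_s)$. These corrections are right, but they should be stated explicitly, since the equation you end up deriving is not literally the one displayed. Second, in the gain-identification step the independence hypothesis is doing more work than you acknowledge. In general, the gain for counting-process observations contains an additional term coming from the common jumps of the signal martingale $Z_t$ and the observation $Y_t$ (the conditional projection of the bracket of $Z$ and $M$); it is precisely the assumed independence of the increments of $M$ from the signal and observation data that annihilates this term. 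You invoke that hypothesis only to justify the innovation-martingale construction; it must be invoked again where you match jumps of $\pi(X_t)$ at jump times of $Y$, otherwise the identification of $K_s$ as $[\pi(h_s)]^{-1}\bigl(\pi(X_{s^-}h_s)-\pi(X_{s^-})\pi(h_s)\bigr)$ is incomplete as stated.
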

Now we are ready to derive the main result of classical versions for quantum filters.
\begin{theorem} A classical version exists for the quantum filtering problem defined by the equations $\eqref{sig}, \eqref{hed}, and \eqref{phc}$.
\begin{proof}
Let us apply spectral theorem on the commutative von Neumann algebra $(\mathscr{Y}_t, \mathbb{P})$ of the observation process (Ref \cite{Sa} Proposition 1.18.1). The abelian $\mathscr{Y}_t$ is *-isomorphic to a W* algebra $L^\infty(\Omega,\mathcal{F},\mu)$. Let us outline some of the main steps of the proof.

The linear functional $\mathbb{P}$ defines a Radon measure $\mu$ such that
\begin{equation}
\mathbb{P}(a) = \int_{\Omega}a(t)d\mu   \qquad  \forall a \in \mathscr{Y}
\end{equation}
The mapping $\Phi:a \rightarrow a(t)$ is a *-homomorphism such that
\begin{equation}
\mathbb{P}(ab) = \int_{\Omega}a(t)b(t)d\mu \qquad  \forall a,b \in \mathscr{Y}
\end{equation}
Therefore the target random variables a(t) are square integrable.

We then apply Radon-Nikodyn theorem to get a probability measure $\phi$ that is absolutely continuous with respect to $\mu$. In the vacuum state the quantum Martingale component of the observation processes are mapped to classical square integrable Martingales, a continuous Martingale in the case of Homodyne detection and a c$\grave{a}$dl$\grave{a}$g Martingale in the case of photon counting, with respect to the natural filtration, that can be represented as integral with respect to Brownian motions.

On the same probability space using Thoerem 1 we can replace the non-commutative process $Z_t$ with its classical version $\tilde{Z}_t$ which is a Brownian motion. \\
So the classical version of the signal and observation processes become,
\begin{equation}
dj_t(\tilde{X}) = \mathrm{E}_{\phi}(\iota(\mathscr{L}_{L,H}(\tilde{X}))) + d\tilde{Z}_t
\end{equation}
\begin{equation}
d\tilde{\mathscr{Y}}_t^W = j_t(L + L^*)dt + d\tilde{V}_t
\end{equation}
\begin{equation}
d\tilde{\mathscr{Y}}_t^\Lambda = j_t(L^*L)dt + d\tilde{U}_t
\end{equation}
where $\tilde{Z}, \tilde{V}, \tilde{U}$ are all independent Brownian motions.
We get by applying Theorem 2 for the Homodyne detection and Theorem 3 for photon counting respectively,
\begin{equation}
d\pi_t(\tilde{X}) = \pi_t(\mathscr{L}_{L,H})dt + (\pi_t(L^*\tilde{X} + \tilde{X}L) - \pi_t(L^* + L)\pi_t(\tilde{X}))(d\tilde{\mathscr{Y}_t} - \pi_t(L^* + L)dt)
\end{equation}
\begin{equation}
d\pi_t(\tilde{X}) = \pi_t(\mathscr{L}_{L,H})dt + (\frac{\pi_t(L^*\tilde{X}L)}{\pi_t(L^*L)} - \pi_t(\tilde{X}))(d\tilde{\mathscr{Y}_t} - \pi_t(L^*L)dt)
\end{equation}
\end{proof}
\end{theorem}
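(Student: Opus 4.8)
The plan is to transport the entire quantum filtering problem onto a single classical probability space, exploiting the commutativity of the observation algebra $\mathscr{Y}_t$ together with the classical-version result of Theorem 1, and then to invoke the classical filtering theorems (Theorem 2 for homodyne detection, Theorem 3 for photon counting) on the transported data. The organizing principle is that once every object appearing in the signal and observation equations \eqref{sig}, \eqref{hed}, and \eqref{phc} is realized as a genuine real-valued random variable with matching time-ordered moments, the quantum conditional expectation $\pi_t(X)=\mathbb{P}(j_t(X)|\mathscr{Y}_t)$ collapses to an ordinary classical conditional expectation, and the recursive equations are then read off from the classical machinery.

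First I would make the commutative structure explicit. Since $\mathscr{Y}_t$ is an abelian von Neumann algebra, the spectral theorem (equivalently Gelfand-Naimark) provides a $*$-isomorphism $\mathscr{Y}_t \cong L^\infty(\Omega,\mathcal{F},\mu)$ under which the faithful normal state $\mathbb{P}$ becomes integration against a Radon measure $\mu$; a Radon-Nikodym normalization then yields a bona fide probability measure. This identifies the observation processes $Y_t^W$ and $Y_t^\Lambda$ with real-valued classical processes on $(\Omega,\mathcal{F})$ and, by the defining relation of the conditional expectation onto a commutative subalgebra, makes the quantum $\pi_t$ agree with the classical conditional expectation on that space.

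Next I would transport the noise. By Proposition 2 the $\mathscr{N}$-Brownian and $\mathscr{N}$-Martingale parts are both quantum markovian and quantum martingale, and by Theorem 1 the $\mathscr{N}$-Brownian motion $Z_t$ admits a classical version $\tilde{Z}_t$ that is a standard Brownian motion with covariance $\min(s,t)$. In the vacuum state the martingale part of each observation process then maps to a classical square-integrable martingale for the natural filtration: continuous in the homodyne case and c\`adl\`ag (compensated counting) in the photon-counting case. Invoking the martingale representation theorem I would express these images as stochastic integrals against Brownian motions $\tilde{V}_t$ and $\tilde{U}_t$, thereby putting the signal/observation system into the exact form required by Theorems 2 and 3, with independent Brownian drivers. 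Applying Theorem 2 to the homodyne system and Theorem 3 to the counting system then yields the two recursive filtering equations, which completes the construction of the classical version.

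The step I expect to be hardest is the simultaneous realization: one must place the signal noise $Z_t$, the observation noise, and the signal $j_t(X)$ itself on the \emph{same} classical space so that all joint time-ordered moments agree and, crucially, so that the independence hypotheses of Theorems 2 and 3 survive the transport --- namely that $\tilde{V}_t$ and $\tilde{U}_t$ are independent of the signal noise, and that the future martingale increments are independent of the observed history. Theorem 1 as stated only matches the time-ordered moments of a single process, so the delicate point is extending it to the joint law of the coupled signal-observation pair while respecting the product structure $\rho\otimes\phi$ of the vacuum state. This is precisely where the quantum markovianity of Proposition 2 and the compatibility relation built into the definition of conditional expectation onto $\mathscr{Y}_t$ must be combined, so that the classical conditional expectation on $L^\infty(\Omega,\mathcal{F},\mu)$ genuinely reproduces the quantum filter $\pi_t$ rather than merely a process with the same one-time marginals.
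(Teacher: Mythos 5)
Your proposal follows essentially the same route as the paper: spectral theorem identifying $\mathscr{Y}_t$ with $L^\infty(\Omega,\mathcal{F},\mu)$, Radon--Nikodym normalization to a probability measure, mapping the vacuum-state martingale components to classical continuous (homodyne) and c\`adl\`ag (counting) martingales represented as Brownian integrals, invoking Theorem 1 for the classical version $\tilde{Z}_t$, and then applying Theorems 2 and 3 to obtain the two recursive filtering equations. The ``hardest step'' you flag --- the joint realization of signal and observation noises on one space with the independence hypotheses intact --- is indeed the point the paper's own proof asserts rather than proves, so your diagnosis is apt, but your construction itself is the same as the paper's.
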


\end{document}